\documentclass[a4paper,12pt]{article}

\usepackage{amssymb,amsmath,amsthm}

\usepackage[english]{babel}

\usepackage[margin=2cm]{geometry}
\usepackage{setspace} \onehalfspacing
\sloppy
\usepackage{mathptmx} 
\usepackage{longtable} 

\usepackage[unicode=true, bookmarks=false, breaklinks=false,
pdfborder={0 0 1},backref=false,colorlinks=true]{hyperref}
\hypersetup{xetex,linkcolor=blue,citecolor=[rgb]{0,0.6,0},urlcolor=blue}


\makeatletter

\theoremstyle{plain}
\newtheorem{thm}{\protect\theoremname}[section]
  \theoremstyle{definition}
  \newtheorem{defn}[thm]{\protect\definitionname}
  \theoremstyle{plain}
  \newtheorem*{thm*}{\protect\theoremname}
  \theoremstyle{plain}
  
  \theoremstyle{plain}
  \newtheorem{lem}[thm]{\protect\lemmaname}
  \theoremstyle{remark}
  \newtheorem{rem}[thm]{\protect\remarkname}
  \theoremstyle{plain}
  \newtheorem{cor}[thm]{\protect\corollaryname}

\makeatother

  \providecommand{\corollaryname}{Corollary}
  \providecommand{\definitionname}{Definition}
  \providecommand{\lemmaname}{Lemma}
  \providecommand{\propositionname}{Proposition}
  \providecommand{\remarkname}{Remark}
  \providecommand{\theoremname}{Theorem}
\providecommand{\theoremname}{Theorem}


\begin{document}

\title{Random Schr\"odinger operators with\linebreak{}
a background potential}

\author{Hayk Asatryan and Werner Kirsch\\
Faculty of Mathematics and Computer Science\\
FernUniversit\"at in Hagen, Germany\smallskip{}
}

\maketitle
{\small{}}%
\begin{longtable}{p{12cm}}
\textbf{\small{}Abstract.}{\small{} We consider one-dimensional random
Schr\"odinger operators with a background potential, arising in the
inverse problem of scattering. We study the influence of the background
potential on the essential spectrum of the random Schr\"odinger operator
and obtain Anderson Localization for a larger class of one-dimensional
Schr\"odinger operators. Further, we prove the existence of the integrated
density of states and give a formula for it.}\tabularnewline
\end{longtable}{\small \par}

\bigskip{}

\section{Introduction}

The fundamental work of physicist P. W. Anderson \cite{And58} gave
rise to a lot of further investigations of different mathematicians.
P. W. Anderson argued that electrons inside a disordered quantum mechanical
system are localized (named Anderson Localization lately), provided
that the degree of randomness of the impurities or defects is sufficiently
large. One way to express Anderson Localization in mathematically
rigorous terms is that the corresponding Schr\"odinger (Hamilton) operator
has a pure point spectrum. For one-dimensional Schr\"odinger operators
Anderson expected localization for all energies and arbitrary small
disorder.

In the one-dimensional model the Schr\"odinger operator is the alloy-type
operator 
\[
-\frac{d^{2}}{dx^{2}}+V_{\omega}\:,
\]
where the random potential $V_{\omega}$ may for example be of the
form 
\begin{equation}
V_{\omega}(x):=\sum_{k=-\infty}^{\infty}q_{k}(\omega)f(x-k)\quad\left(x\in\mathbb{R}\right).\label{eq:0010}
\end{equation}
Here $f$ is a real-valued function and $q_{k}\;\left(k\in\mathbb{Z}\right)$
are independent random variables with a common distribution $P_{0}$.
The first proof of Anderson Localization for a related one-dimensional
model was given by I. Goldsheid, S. Molchanov and L. Pastur in \cite{Gold77}.
For the discrete analogue of \eqref{eq:0010} localization was first
proved by H. Kunz and B. Souillard \cite{Kunz80}. The first complete
proofs of localization in the higher dimensional case were given in
\cite{FrSp83}, \cite{AizMol93}. The existence of the integrated
density of states for alloy-type operators was established in \cite{Pas73}
and \cite{KiMados}.

On the space $L^{2}(\mathbb{R})$ we consider Schr\"odinger operators
of the form 
\begin{align*}
H_{\omega}:=-\frac{d^{2}}{dx^{2}}+U+V_{\text{per}}+V_{\omega}.
\end{align*}
We assume that the background potential $U$ belongs to the space
of real-valued uniformly locally square-integrable functions 
\begin{align*}
L_{\text{unif }}^{2,\,\text{loc}}:=\{F:\mathbb{R}\to\mathbb{R}\mid\sup_{x\in\mathbb{R}}\;\int\limits _{x-1}^{x+1}|F(x)|^{2}dx<\infty\}
\end{align*}
and satisfies the relations 
\begin{align*}
U(x)~\to~a^{-}\quad\text{as}\quad x\to-\infty,\qquad U(x)~\to~a^{+}\quad\text{as}\quad x\to+\infty
\end{align*}
with $a^{\pm}\in\mathbb{R}$. The potential $U$ arises in the inverse
problem of scattering (see \cite{Asat05}). $V_{\text{per}}$ is assumed
to be a $1$-periodic real-valued function in $L_{\text{unif }}^{2,\,\text{loc}}$,
and $V_{\omega}$ is a random alloy-type potential of the form \eqref{eq:0010}.
We suppose that $f$, called the single-site potential, satisfies
the estimate 
\begin{align*}
\left|f(x)\right|\leqslant C\left(1+\left|x\right|\right)^{-\gamma}\quad\left(x\in\mathbb{R}\right)
\end{align*}
for some $\gamma>1$.

We assume for simplicity that $\operatorname{supp}P_{0}$ is a compact
subset of $\mathbb{R}$. We remark that the existence of enough moments
of $P_{0}$ would be sufficient. Moreover, $f$ may have local singularities.

Under the above assumptions, the potentials $U$, $V_{\text{per}}$,
$V_{\omega}$ and their sums belong to $L_{\text{unif }}^{2,\,\text{loc}}$,
hence they are $H_{0}$-bounded (see \cite{ReedSimon4}, Theorem XIII.96).
Moreover, the operators 
\begin{align*}
 & H_{0}:=~-\frac{d^{2}}{dx^{2}}\quad\mbox{(the free Hamiltonian)},\\
 & H_{U}:=~H_{0}+U,\\
 & H_{\text{per}}:=~H_{0}+V_{\text{per}},\\
 & H_{U,\,\text{per}}:=~H_{0}+U+V_{\text{per}}
\end{align*}
and $H_{\omega}$ are essentially self-adjoint on $C_{0}^{\infty}(\mathbb{R})$.

Throughout this work, $\left\Vert \cdot\right\Vert $ will denote
the $L^{2}$-norm. The spectrum and the essential spectrum of a linear
operator $A$ will be denoted by $\sigma\left(A\right)$ and $\sigma_{\text{ess}}\left(A\right)$,
respectively.

\section{The essential spectra of $H_{U+V}$ and $H_{U,\,\text{per}}$}

One of the main observations of this section is the following result.

\begin{thm}\label{th:ess1} Let $U_{1},U_{2},V:\mathbb{R}\to\mathbb{R}$
be $H_{0}$-bounded measurable functions and 
\[
U_{j}(x)\xrightarrow[x\to-\infty]{}a^{-},\quad U_{j}(x)\xrightarrow[x\to\infty]{}a^{+}\quad\left(j=1,2\right)
\]
for some $a^{\pm}\in\mathbb{R}$. Then 
\[
\sigma_{\text{\textup{ess}}}\left(H_{U_{1}+V}\right)=\sigma_{\text{\textup{ess}}}\left(H_{U_{2}+V}\right).
\]
\end{thm}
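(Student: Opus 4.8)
The plan is to exploit that the two operators differ only by the multiplication operator $W:=U_{1}-U_{2}$, and that $W(x)\to 0$ as $x\to\pm\infty$ since the two potentials share the common limits $a^{\pm}$ at each end. The strategy is therefore to show that $W$ is a relatively compact perturbation of $H_{U_{2}+V}$ and then to invoke the invariance of the essential spectrum under relatively compact self-adjoint perturbations (Weyl's theorem). First I would record that $H_{U_{1}+V}=H_{U_{2}+V}+W$ on the common core $C_{0}^{\infty}(\mathbb{R})$, that both operators are self-adjoint, and that it suffices to prove that $W(H_{U_{2}+V}+i)^{-1}$ is compact. Since $U_{2}+V$ is $H_{0}$-bounded with relative bound less than one, the operator $(H_{0}+i)(H_{U_{2}+V}+i)^{-1}=I-(U_{2}+V)(H_{U_{2}+V}+i)^{-1}$ is bounded, so writing $W(H_{U_{2}+V}+i)^{-1}=W(H_{0}+i)^{-1}\cdot(H_{0}+i)(H_{U_{2}+V}+i)^{-1}$ reduces everything to showing that $W$ is $H_{0}$-compact, i.e.\ that $W(H_{0}+i)^{-1}$ is compact. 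Here I would use that, by the cited Theorem XIII.96 of \cite{ReedSimon4}, $H_{0}$-boundedness of a multiplication operator in this one-dimensional setting is equivalent to membership in $L_{\text{unif}}^{2,\,\text{loc}}$, so $W\in L_{\text{unif}}^{2,\,\text{loc}}$.

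Second, I would establish the $H_{0}$-compactness of $W$ by a cut-off argument. Writing $\chi_{R}$ for the indicator of $[-R,R]$, split $W(H_{0}+i)^{-1}=W\chi_{R}(H_{0}+i)^{-1}+W(1-\chi_{R})(H_{0}+i)^{-1}$. For the localized term, $W\chi_{R}$ lies genuinely in $L^{2}(\mathbb{R})$, and since the integral kernel $G(x,y)$ of $(H_{0}+i)^{-1}$ decays exponentially in $|x-y|$, a direct computation shows that $W\chi_{R}(H_{0}+i)^{-1}$ is Hilbert--Schmidt, hence compact, with $\|W\chi_{R}(H_{0}+i)^{-1}\|_{\mathrm{HS}}^{2}=c\int_{-R}^{R}|W(x)|^{2}\,dx<\infty$.

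Third, for the tail term I would invoke the standard bound that multiplication by a function $g\in L_{\text{unif}}^{2,\,\text{loc}}$ controls the resolvent through its local $L^{2}$ norm, namely $\|g(H_{0}+i)^{-1}\|\leqslant C\,\sup_{x}\big(\int_{x-1}^{x+1}|g|^{2}\big)^{1/2}$. Applied to $g=W(1-\chi_{R})$ this yields $\|W(1-\chi_{R})(H_{0}+i)^{-1}\|\leqslant C\,\sup_{|x|\geqslant R-1}\big(\int_{x-1}^{x+1}|W|^{2}\big)^{1/2}$, and the decay of $W$ at $\pm\infty$ forces the right-hand side to $0$ as $R\to\infty$. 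Consequently $W(H_{0}+i)^{-1}$ is a norm limit of compact operators, hence compact, which completes the reduction and gives $\sigma_{\text{ess}}(H_{U_{1}+V})=\sigma_{\text{ess}}(H_{U_{2}+V})$.

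I expect the main obstacle to be the tail estimate, specifically upgrading the hypothesis $U_{j}(x)\to a^{\pm}$ to the vanishing of the locally uniform $L^{2}$ averages $\sup_{|x|\geqslant R}\int_{x-1}^{x+1}|W|^{2}\to 0$ that the operator-norm bound requires. Here the pointwise decay of $W=U_{1}-U_{2}$, together with the uniform local $L^{2}$ bound inherited from $U_{1},U_{2}\in L_{\text{unif}}^{2,\,\text{loc}}$, must be combined (via dominated convergence on each unit window, using that both potentials tend to the same constant $a^{\pm}$) to produce the desired locally uniform decay. By comparison, the Hilbert--Schmidt estimate for the localized part and the self-adjointness bookkeeping are routine.
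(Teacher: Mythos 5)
Your proof is correct, but it takes a genuinely different route from the paper's. The paper works directly with singular sequences: by the decomposition principle (Theorem 3.11 of \cite{Cyc08}), any $\lambda\in\sigma_{\text{ess}}\left(H_{U_{1}+V}\right)$ admits a normalized Weyl sequence $\varphi_{n}\in C_{0}^{\infty}(\mathbb{R})$ whose supports lie in $(-\infty,-n)$ or in $(n,\infty)$; on such supports both $U_{1}$ and $U_{2}$ are uniformly close to the common limit $a^{\mp}$, so the same $\varphi_{n}$ is a Weyl sequence for $H_{U_{2}+V}$, and the two inclusions follow symmetrically. You instead isolate $W=U_{1}-U_{2}$, prove $W$ is $H_{0}$-compact (cut-off, Hilbert--Schmidt estimate for the localized piece, resolvent bound via the uniform local $L^{2}$ norm for the tail), and invoke the Weyl theorem on relatively compact perturbations; this is sound, including the factorization through $(H_{0}+i)(H_{U_{2}+V}+i)^{-1}$. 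Two comments on the comparison. First, the step you single out as the main obstacle is in fact trivial: $U_{j}(x)\to a^{\pm}$ in the usual sense means $|W|\leqslant 2\varepsilon$ a.e.\ outside a compact set, whence $\sup_{|x|\geqslant R}\int_{x-1}^{x+1}|W|^{2}\leqslant 8\varepsilon^{2}$ for large $R$ --- no dominated convergence and no appeal to $L_{\text{unif }}^{2,\,\text{loc}}$ is needed there; note also that your claim that $U_{2}+V$ has $H_{0}$-bound less than one rests on the same Theorem XIII.96 of \cite{ReedSimon4} (in one dimension, $H_{0}$-boundedness of a multiplication operator forces membership in $L_{\text{unif }}^{2,\,\text{loc}}$, hence relative bound zero), which is worth stating since the hypothesis only gives $H_{0}$-boundedness. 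Second, the real trade-off: your argument is the standard textbook one and avoids the geometric decomposition principle, but it proves only the stated equality. The paper's Weyl-sequence technique is reused immediately to obtain Corollary \ref{thm1040} and Theorem \ref{thm1050}, which compare $H_{U+V}$ with the \emph{shifted} operators $a^{\pm}+H_{V}$; when $a^{-}\neq a^{+}$ the corresponding potential differences do not vanish at both ends, so they are not relatively compact perturbations and your method cannot be recycled for those results, whereas the paper's localized-support argument handles them with no extra work.
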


\begin{proof} We need to show that 
\[
\sigma_{\text{ess}}\left(H_{U_{1}+V}\right)\subset\sigma_{\text{ess}}\left(H_{U_{2}+V}\right),
\]
\[
\sigma_{\text{ess}}\left(H_{U_{2}+V}\right)\subset\sigma_{\text{ess}}\left(H_{U_{1}+V}\right).
\]
We'll prove the first inclusion (the proof of the second one is similar).
Let $\lambda\in\sigma_{\text{ess}}\left(H_{U_{1}+V}\right)$. By Weyl's
criterion and Theorem 3.11 in \cite{Cyc08} we conclude that there
is a Weyl sequence of functions $\varphi_{n}\in C_{0}^{\infty}(\mathbb{R})\;\:\left(n\in\mathbb{N}\right)$
such that

\[
\left\Vert \varphi_{n}\right\Vert =1\quad\left(n\in\mathbb{N}\right),
\]
\begin{equation}
\left\Vert \left(H_{U_{1}+V}-\lambda I\right)\varphi_{n}\right\Vert \to0\label{eq:1099}
\end{equation}
and either 
\begin{align}
\operatorname{supp}\varphi_{n}~\subset(-\infty,-n)\quad & \text{for all \ensuremath{\ensuremath{n\in\mathbb{N}}}}\label{eq:minus}
\end{align}
or 
\begin{align*}
\operatorname{supp}\varphi_{n}~\subset(n,\infty)\quad & \text{for all \ensuremath{n\in\mathbb{N}}}
\end{align*}
holds. Assume \eqref{eq:minus} is true, then 
\[
\left\Vert \left(H_{U_{1}+V}-\lambda I\right)\varphi_{n}\right\Vert -\left\Vert \left(H_{V}-\left(\lambda-a^{-}\right)I\right)\varphi_{n}\right\Vert \to0,
\]
\[
\left\Vert \left(H_{U_{2}+V}-\lambda I\right)\varphi_{n}\right\Vert -\left\Vert \left(H_{V}-\left(\lambda-a^{-}\right)I\right)\varphi_{n}\right\Vert \to0
\]
and hence 
\[
\left\Vert \left(H_{U_{1}+V}-\lambda I\right)\varphi_{n}\right\Vert -\left\Vert \left(H_{U_{2}+V}-\lambda I\right)\varphi_{n}\right\Vert \to0.
\]
From this and \eqref{eq:1099} we obtain 
\[
\left\Vert \left(H_{U_{2}+V}-\lambda I\right)\varphi_{n}\right\Vert \to0,
\]
therefore $\lambda\in\sigma_{\text{ess}}\left(H_{U_{2}+V}\right)$.
\end{proof}

As a corollary to the proof of Theorem \ref{th:ess1} we get \begin{cor}
\label{thm1040} Let $U,V:\mathbb{R}\to\mathbb{R}$ be measurable,
$H_{0}$-bounded and 
\[
U(x)\xrightarrow[x\to-\infty]{}a^{-},\quad U(x)\xrightarrow[x\to\infty]{}a^{+}
\]
(in the usual sense), where $a^{\pm}\in\mathbb{R}$. Then 
\begin{equation}
\sigma_{\text{\textup{ess}}}\left(H_{U+V}\right)\subset\left(a^{-}+\sigma_{\text{\textup{ess}}}\left(H_{V}\right)\right)\cup\left(a^{+}+\sigma_{\text{\textup{ess}}}\left(H_{V}\right)\right).\label{eq:1100}
\end{equation}
\end{cor}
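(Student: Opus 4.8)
The plan is to reuse the Weyl-sequence machinery from the proof of Theorem \ref{th:ess1}, only now comparing $H_{U+V}$ directly with $H_V$ instead of with a second background potential. Fix $\lambda\in\sigma_{\text{ess}}(H_{U+V})$. By Weyl's criterion together with Theorem 3.11 in \cite{Cyc08}, there is a sequence $\varphi_n\in C_0^\infty(\mathbb{R})$ with $\|\varphi_n\|=1$ and $\|(H_{U+V}-\lambda I)\varphi_n\|\to0$ whose supports escape to infinity on one side, i.e. either $\operatorname{supp}\varphi_n\subset(-\infty,-n)$ for all $n$, or $\operatorname{supp}\varphi_n\subset(n,\infty)$ for all $n$. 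Since the supports diverge, $\varphi_n\rightharpoonup0$ weakly, so $(\varphi_n)$ is a singular Weyl sequence and therefore detects \emph{essential} (not merely spectral) points of any operator it is tested against.

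I would then split into the two cases. Suppose first that $\operatorname{supp}\varphi_n\subset(-\infty,-n)$. The difference between $(H_{U+V}-\lambda I)\varphi_n$ and $(H_V-(\lambda-a^-)I)\varphi_n$ is exactly the multiplication $(U-a^-)\varphi_n$, which is the one step that needs care. Since $U(x)\to a^-$ as $x\to-\infty$ in the pointwise sense, for every $\varepsilon>0$ there is an $M$ with $|U(x)-a^-|<\varepsilon$ for $x<-M$; as soon as $n>M$ the support of $\varphi_n$ lies in $(-\infty,-M)$, whence $\|(U-a^-)\varphi_n\|\leqslant\varepsilon\|\varphi_n\|=\varepsilon$. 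Thus $\|(U-a^-)\varphi_n\|\to0$, and combined with $\|(H_{U+V}-\lambda I)\varphi_n\|\to0$ this yields $\|(H_V-(\lambda-a^-)I)\varphi_n\|\to0$. As $(\varphi_n)$ is a singular Weyl sequence, we conclude $\lambda-a^-\in\sigma_{\text{ess}}(H_V)$, that is $\lambda\in a^-+\sigma_{\text{ess}}(H_V)$.

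The case $\operatorname{supp}\varphi_n\subset(n,\infty)$ is identical with $a^-$ replaced by $a^+$ and produces $\lambda\in a^++\sigma_{\text{ess}}(H_V)$. Since exactly one of the two alternatives always holds, every $\lambda\in\sigma_{\text{ess}}(H_{U+V})$ lies in $\left(a^-+\sigma_{\text{ess}}(H_V)\right)\cup\left(a^++\sigma_{\text{ess}}(H_V)\right)$, which is precisely \eqref{eq:1100}.

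The only genuinely non-formal step is the passage $\|(U-a^-)\varphi_n\|\to0$: it relies on upgrading the pointwise convergence of $U$ at infinity to a uniform bound on the escaping supports, together with the normalization $\|\varphi_n\|=1$. The remaining obstacle to keep in mind is ensuring the constructed sequence is a \emph{singular} Weyl sequence, so that it witnesses the essential spectrum of $H_V$ rather than merely its spectrum; this is automatic here because the diverging supports force weak convergence to zero. Everything else is a direct transcription of the argument already carried out for Theorem \ref{th:ess1}.
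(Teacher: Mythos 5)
Your proposal is correct and is essentially the paper's own argument: the paper states this corollary as a direct byproduct of the proof of Theorem \ref{th:ess1}, whose Weyl-sequence comparison already establishes $\left\Vert \left(H_{V}-(\lambda-a^{\mp})I\right)\varphi_{n}\right\Vert \to0$ for the escaping-support sequence, exactly as you do. Your explicit remark that the diverging supports make the sequence weakly null (hence a singular Weyl sequence detecting essential spectrum) is a detail the paper leaves implicit, but it is the right justification.
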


\begin{rem} The previous theorem shows that the knowledge of $V$
and $a^{\pm}$ is sufficient for the unique determination of $\sigma_{\text{ess}}\left(H_{U+V}\right)$.
In fact, 
\[
\sigma_{\text{ess}}\left(H_{U+V}\right)=\sigma_{\text{ess}}\left(H_{U_{c}+V}\right),
\]
where $U_{c}=a^{-}\chi_{(-\infty,0]}+a^{+}\chi_{(0,\infty)}$. \end{rem}
In general, equality in \eqref{eq:1100} does not hold. However, for
the case of periodic potentials we have:

\begin{thm} \label{thm1050} Let $U:\mathbb{R}\to\mathbb{R}$ be
measurable, $H_{0}$-bounded and satisfy the conditions 
\[
U(x)\xrightarrow[x\to-\infty]{}a^{-},\quad U(x)\xrightarrow[x\to\infty]{}a^{+},
\]
and let $V_{\text{\textup{per}}}$ be a $H_{0}$-bounded periodic
potential, then 
\begin{equation}
\sigma_{\text{\textup{ess}}}\left(H_{U,\,\text{\textup{per}}}\right)=\left(a^{-}+\sigma_{\text{\textup{ess}}}\left(H_{\text{\textup{per}}}\right)\right)\cup\left(a^{+}+\sigma_{\text{\textup{ess}}}\left(H_{\text{\textup{per}}}\right)\right).\label{eq:1105}
\end{equation}
\end{thm}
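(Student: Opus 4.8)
The inclusion $\subseteq$ in \eqref{eq:1105} is immediate from Corollary \ref{thm1040} applied with $V=V_{\text{per}}$: then $H_{U+V}=H_{U,\,\text{per}}$ and $H_{V}=H_{\text{per}}$, so \eqref{eq:1100} reads exactly as the $\subseteq$ half of \eqref{eq:1105}. The entire content of the theorem is therefore the reverse inclusion, and this is where periodicity must be used. My plan is first to simplify the background potential: by the Remark following Corollary \ref{thm1040} (that is, by Theorem \ref{th:ess1} with $U_{1}=U$, $U_{2}=U_{c}$ and $V=V_{\text{per}}$) we have $\sigma_{\text{ess}}(H_{U,\,\text{per}})=\sigma_{\text{ess}}(H_{U_{c},\,\text{per}})$, where $U_{c}=a^{-}\chi_{(-\infty,0]}+a^{+}\chi_{(0,\infty)}$. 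The gain is that $U_{c}$ is bounded and equals $a^{-}$ \emph{exactly} on $(-\infty,0]$ and $a^{+}$ exactly on $(0,\infty)$, so no relative-bound estimate for the tail of $U$ is needed. It thus suffices to prove $(a^{-}+\sigma_{\text{ess}}(H_{\text{per}}))\cup(a^{+}+\sigma_{\text{ess}}(H_{\text{per}}))\subseteq\sigma_{\text{ess}}(H_{U_{c},\,\text{per}})$.

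Fix $\lambda\in a^{-}+\sigma_{\text{ess}}(H_{\text{per}})$; the case $\lambda\in a^{+}+\sigma_{\text{ess}}(H_{\text{per}})$ is symmetric. Put $\mu:=\lambda-a^{-}\in\sigma_{\text{ess}}(H_{\text{per}})\subseteq\sigma(H_{\text{per}})$. By Weyl's criterion there is a normalized sequence $\psi_{n}$ with $\|(H_{\text{per}}-\mu)\psi_{n}\|\to0$, and since $C_{0}^{\infty}(\mathbb{R})$ is a core for $H_{\text{per}}$ (it is essentially self-adjoint there, as noted in the introduction) I may assume, after a small perturbation in graph norm and renormalization, that every $\psi_{n}\in C_{0}^{\infty}(\mathbb{R})$ has compact support. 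Let $p$ denote the period of $V_{\text{per}}$ and let $(T_{a}f)(x)=f(x-a)$ be translation. The key observation is that $H_{\text{per}}$ commutes with $T_{kp}$ for every $k\in\mathbb{Z}$, because $V_{\text{per}}$ is $p$-periodic. Hence for each $n$ I can slide $\psi_{n}$ far to the left, setting $\phi_{n}:=T_{-k_{n}p}\psi_{n}$ with $k_{n}$ chosen so large that $\operatorname{supp}\phi_{n}\subset(-\infty,-n)$, while retaining $\|\phi_{n}\|=1$ and $\|(H_{\text{per}}-\mu)\phi_{n}\|=\|(H_{\text{per}}-\mu)\psi_{n}\|\to0$.

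It then remains to check that $(\phi_{n})$ is a singular Weyl sequence for $H_{U_{c},\,\text{per}}$ at $\lambda$. Writing $H_{U_{c},\,\text{per}}-\lambda=(H_{\text{per}}-\mu)+(U_{c}-a^{-})$ and using that $\phi_{n}$ is supported in $(-\infty,0]$, where $U_{c}\equiv a^{-}$, the multiplication term $(U_{c}-a^{-})\phi_{n}$ vanishes identically, so $\|(H_{U_{c},\,\text{per}}-\lambda)\phi_{n}\|=\|(H_{\text{per}}-\mu)\phi_{n}\|\to0$. Finally, since the supports of the $\phi_{n}$ escape to $-\infty$, the sequence converges weakly to $0$ in $L^{2}(\mathbb{R})$; combined with $\|\phi_{n}\|=1$ and the preceding convergence this is precisely the singular-sequence characterization of the essential spectrum, yielding $\lambda\in\sigma_{\text{ess}}(H_{U_{c},\,\text{per}})=\sigma_{\text{ess}}(H_{U,\,\text{per}})$. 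The analogous construction with supports escaping to $+\infty$ handles the $a^{+}$-part, which together with the first paragraph gives \eqref{eq:1105}.

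The one genuinely non-trivial ingredient is the translation step: it is exactly the periodicity of $V_{\text{per}}$ that allows an approximate eigenfunction of $H_{\text{per}}$ to be transported arbitrarily far to $\pm\infty$ without disturbing its approximate eigenvalue, and it is precisely this feature that fails for a general $V$ — consistent with the observation preceding the theorem that equality in \eqref{eq:1100} need not hold. Everything else (the reduction to $U_{c}$, the commutation $H_{\text{per}}T_{kp}=T_{kp}H_{\text{per}}$, and the extraction of the weakly null sequence) is routine once this point is in place.
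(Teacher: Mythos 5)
Your proof is correct and follows essentially the same route as the paper: the inclusion $\subseteq$ comes from Corollary \ref{thm1040}, and the reverse inclusion is obtained by taking a compactly supported Weyl sequence for $H_{\text{per}}$ at $\lambda-a^{\mp}$ and using periodicity to translate it so that its supports escape to $\mp\infty$, where the background potential acts as the constant $a^{\mp}$. The only (harmless) difference is your preliminary replacement of $U$ by the step potential $U_{c}$ via Theorem \ref{th:ess1}, which makes the final verification exact instead of an $\varepsilon$-estimate; the paper instead works with $U$ directly and absorbs the error using $U(x)\to a^{\mp}$ on the escaping supports.
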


\begin{proof}In the view of Corollary \ref{thm1040}, we need to
prove that 
\begin{equation}
a^{-}+\sigma_{\text{ess}}\left(H_{\text{per}}\right)\subset\sigma_{\text{ess}}\left(H_{U,\,\text{per}}\right),\label{eq:1110}
\end{equation}
\begin{equation}
a^{+}+\sigma_{\text{ess}}\left(H_{\text{per}}\right)\subset\sigma_{\text{ess}}\left(H_{U,\,\text{per}}\right).\label{eq:1120}
\end{equation}
We'll prove \eqref{eq:1110} (the proof of \eqref{eq:1120} is similar).
Let $\lambda\in a^{-}+\sigma_{\text{ess}}\left(H_{\text{per}}\right)$,
i.e., $\lambda-a^{-}\in\sigma_{\text{ess}}\left(H_{\text{per}}\right)$.
Then there is a Weyl sequence $\varphi_{n}\in C_{0}^{\infty}(\mathbb{R})\;\:\left(n\in\mathbb{N}\right)$
with 
\begin{enumerate}
\item $\left\Vert \varphi_{n}\right\Vert =1\;\:\left(n\in\mathbb{N}\right)$, 
\item $\left\Vert \left(H_{\text{per}}-(\lambda-a^{-})I\right)\varphi_{n}\right\Vert \to0$, 
\end{enumerate}
Since $V_{\text{per}}$ is periodic, any shift of $\varphi_{n}$ by
an integer is also a Weyl sequence for $H_{\text{per}}+a^{-}$. Thus
we may assume that $\operatorname{supp}\varphi_{n}\subset(-\infty,-n)\;\:\left(n\in\mathbb{N}\right)$.
As in the previous proofs, one easily sees that this sequence is also
a Weyl sequence for $H_{\text{per}}+U$. \end{proof}

\begin{rem} It is well known that under the above assumptions on
$V_{\text{per}}$ the equality\linebreak{}
 $\sigma_{\text{ess}}\left(H_{\text{per}}\right)=\sigma\left(H_{\text{per}}\right)$
holds (see \cite{East73}, \cite{ReedSimon4}). \end{rem}

\begin{rem} The special case of the formula \eqref{eq:1105} in which
$V_{\text{per}}=0$, yields 
\[
\sigma_{\text{ess}}\left(H_{U}\right)=\left[\min\left\{ a^{+},a^{-}\right\} ,\infty\right).
\]
This equality was obtained by I. Khachatryan and A. Petrosyan \cite{Khach04}
under the condition 
\[
\int\limits _{-\infty}^{0}\left|U\left(x\right)-a^{-}\right|dx+\int\limits _{0}^{\infty}\left|U\left(x\right)-a^{+}\right|dx<\infty
\]
(see also \cite{Asat05}). \end{rem}

\section{The essential spectrum of $H_{\omega}$}

We turn to the spectrum of $H_{\omega}$. To do so, we first describe
the spectrum of $H_{\text{per}}+V_{\omega}$, i.e., the case $U=0$.
We follow the investigation in \cite{KiMa}. \begin{defn} A potential
$W(x)=\sum\limits _{k\in\mathbb{Z}}\,\rho_{k}\,f(x-k)$ is called
\emph{admissible}, if $\rho_{k}\in\operatorname{supp}P_{0}$ for all
$k$. Let us denote by $\mathcal{P}$ the set of all admissible potentials,
generated by $\ell$-periodic $\rho_{k}$ for some $\ell\in\mathbb{N}$.
\end{defn}

\begin{thm}\label{th:KiMa} The spectrum $\sigma\left(H_{\text{\textup{per}}}+V_{\omega}\right)$
is almost surely independent of $\omega$ and is (almost surely) given
by 
\begin{align*}
\sigma\left(H_{\text{\textup{per}}}+V_{\omega}\right)=\sigma_{\text{\textup{ess}}}\left(H_{\text{\textup{per}}}+V_{\omega}\right)=\,\overline{\bigcup_{W\in\mathcal{P}}\,\sigma\left(H_{\text{\textup{per}}}+W\right)}\,.
\end{align*}
\end{thm}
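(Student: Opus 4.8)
The plan is to establish this as a theorem about ergodic random operators. The statement has two distinct parts that I would handle separately: first, the almost-sure constancy of the spectrum (and the coincidence of spectrum with essential spectrum), and second, the explicit identification of this almost-sure spectrum with the closure of the union of periodic spectra. The random variables $q_k(\omega)$ are i.i.d., so the family $V_\omega$ is ergodic with respect to the group of integer shifts, and $V_{\text{per}}$ is $1$-periodic, so the whole operator $H_{\text{per}}+V_\omega$ is a $\mathbb{Z}$-ergodic family. The standard machinery of ergodic operators (as in Carmona--Lacroix or Kirsch's lecture notes) then yields that $\sigma(H_{\text{per}}+V_\omega)$ equals a fixed nonrandom set $\Sigma$ almost surely; moreover, since the shift action has no nontrivial invariant sets of positive measure and the operator has no isolated eigenvalues of finite multiplicity in the ergodic setting, one also obtains $\sigma=\sigma_{\text{ess}}$ almost surely.

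For the identification of $\Sigma$, I would prove the two inclusions separately. For $\Sigma\subset\overline{\bigcup_{W\in\mathcal{P}}\sigma(H_{\text{per}}+W)}$, the key point is that almost every configuration $(q_k(\omega))_{k\in\mathbb{Z}}$ can be approximated locally by periodic configurations drawn from $\operatorname{supp}P_0$. Given $\lambda\in\Sigma$ and a Weyl sequence $\varphi_n$ localized on larger and larger intervals, I would replace the true potential $V_\omega$ on the support of each $\varphi_n$ by a periodic admissible potential $W$ that agrees with $V_\omega$ there (up to the decay tails of $f$, controlled by the hypothesis $|f(x)|\le C(1+|x|)^{-\gamma}$ with $\gamma>1$); this shows $\lambda$ is approximated by points of $\sigma(H_{\text{per}}+W)$. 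For the reverse inclusion, given any $\ell$-periodic admissible $W$ with $\rho_k\in\operatorname{supp}P_0$, I would use that, by the i.i.d.\ structure and the definition of the support, arbitrarily long finite blocks matching the pattern $(\rho_0,\dots,\rho_{\ell-1})$ appear in $(q_k(\omega))$ with probability one; transplanting a Weyl sequence for $H_{\text{per}}+W$ into such a matching block (again controlling the tails of $f$) shows $\sigma(H_{\text{per}}+W)\subset\Sigma$, and taking the closure over all $W\in\mathcal{P}$ gives the claim.

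The main obstacle I anticipate is making the potential-surgery rigorous in the non-compactly-supported setting: because $f$ has only polynomial decay rather than compact support, replacing $V_\omega$ by an admissible periodic $W$ on a large interval leaves an error coming from the tails $\sum_{|k| \text{ large}} q_k\, f(\cdot-k)$, and one must verify that this error, restricted to the support of $\varphi_n$ and measured in the operator norm against $\varphi_n$, tends to zero. This requires a quantitative estimate combining the uniform boundedness of the $q_k$ (from compactness of $\operatorname{supp}P_0$) with the summability of $(1+|x|)^{-\gamma}$ for $\gamma>1$, ensuring that the difference $(V_\omega-W)\varphi_n$ has small $L^2$-norm when $\varphi_n$ is supported well inside the matching block. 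Once this tail control is in place, the Weyl-sequence transplantation arguments go through essentially as in the classical compactly-supported case treated in \cite{KiMa}.
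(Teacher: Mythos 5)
Your proposal is correct and takes essentially the same route as the paper: the paper offers no self-contained proof but refers to Kirsch--Martinelli \cite{KiMa}, where the case $V_{\text{per}}=0$ is proved by exactly your argument---ergodicity of the $\mathbb{Z}$-stationary family giving almost-sure non-randomness of the spectrum and $\sigma=\sigma_{\text{ess}}$, plus Weyl-sequence transplantation between $V_{\omega}$ and admissible periodic potentials with tail control on $f$---and the general case with $V_{\text{per}}\neq0$ is noted there to be similar. The only detail to phrase carefully is that for continuous $P_{0}$ the blocks of $(q_{k}(\omega))$ can be made to match the periodic pattern only within an arbitrary $\varepsilon>0$ (which is what $\operatorname{supp}P_{0}$ actually yields), so your error estimates must absorb this $\varepsilon$ in addition to the tails of $f$; your sketch already accommodates this.
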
 In the case of $V_{\text{per}}=0$ Theorem \ref{th:KiMa}
is proved in \cite{KiMa}; the proof in the general case is similar.

In particular, the following result was proved in \cite{KiMa}.

\begin{lem}\label{lem:KiMa} If $W$ is a periodic admissible potential
and $\lambda\in\sigma(H_{\text{\textup{per}}}+W)$, then there are
sequences $\varphi_{n}^{+},\varphi_{n}^{-}\in L^{2}(\mathbb{R})\;\:\left(n\in\mathbb{N}\right)$
in the domain of $H_{\text{\textup{per}}}+W$, such that 
\begin{enumerate}
\item $\|\varphi_{n}^{+}\|~=~\|\varphi_{n}^{-}\|~=~1\quad\left(n\in\mathbb{N}\right).$ 
\item The supports of $\varphi_{n}^{+}$ and $\varphi_{n}^{-}$ are compact
and satisfy

$\operatorname{supp}\varphi_{n}^{+}~\subset~[n,\infty)\quad$ and
$\quad\operatorname{supp}\varphi_{n}^{-}~\subset~(-\infty,-n]\quad\left(n\in\mathbb{N}\right).$

\item For almost all $\omega$

$\|\left(H_{\text{\textup{per}}}+V_{\omega}-\lambda\right)\varphi_{n}^{+}\|~\to~0\quad$
and $\quad\|\left(H_{\text{\textup{per}}}+V_{\omega}-\lambda\right)\varphi_{n}^{-}\|~\to~0.$

\end{enumerate}
\end{lem}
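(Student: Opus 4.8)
The plan is to obtain the two sequences by translating a genuine Weyl sequence of the \emph{periodic} operator $H_{\text{per}}+W$ arbitrarily far to the right (resp.\ left) and then exchanging $W$ for the random potential $V_\omega$; the exchange is legitimate because, almost surely, the random configuration $(q_k(\omega))$ reproduces the admissible pattern $(\rho_k)$ on arbitrarily long windows sitting arbitrarily far out. Thus the proof splits into a deterministic construction and a probabilistic selection, the latter being the real obstacle.

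First I would record that, since $V_{\text{per}}$ has period $1$ and $(\rho_k)$ has period $\ell$, the operator $H_{\text{per}}+W$ is $\ell$-periodic, so $\sigma(H_{\text{per}}+W)=\sigma_{\text{ess}}(H_{\text{per}}+W)$. For $\lambda$ in this set, Weyl's criterion and essential self-adjointness on $C_0^\infty(\mathbb{R})$ furnish, for each $n$, a function $\psi_n\in C_0^\infty(\mathbb{R})$ with $\|\psi_n\|=1$, $\operatorname{supp}\psi_n\subset[-L_n,L_n]$ and $\|(H_{\text{per}}+W-\lambda)\psi_n\|<1/n$. Writing $T_a$ for translation by $a$, the operator $H_{\text{per}}+W$ commutes with $T_{\ell m}$ for every $m\in\mathbb{Z}$, so each $T_{\ell m}\psi_n$ is again normalized with residual $<1/n$ and support in $[\ell m-L_n,\ell m+L_n]$. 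I would then set $\varphi_n^+:=T_{\ell m_n}\psi_n$ for integers $m_n$ to be fixed later; being smooth and compactly supported, $\varphi_n^+$ lies in the domains of both $H_{\text{per}}+W$ and $H_{\text{per}}+V_\omega$.

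Next I would control the exchange via
\[
(H_{\text{per}}+V_\omega-\lambda)\varphi_n^+=(H_{\text{per}}+W-\lambda)\varphi_n^++(V_\omega-W)\varphi_n^+,
\]
where only the last term is new. On $\operatorname{supp}\varphi_n^+$ one has $V_\omega-W=\sum_k(q_k(\omega)-\rho_k)\,f(\cdot-k)$, which I would split at a window $[\ell m_n-M_n,\ell m_n+M_n]$ with $M_n>L_n$. For $k$ outside the window and $x\in\operatorname{supp}\varphi_n^+$ we have $|x-k|\ge M_n-L_n$, so the decay $|f(x)|\le C(1+|x|)^{-\gamma}$ with $\gamma>1$ together with the boundedness of $\operatorname{supp}P_0$ bounds the outer part uniformly by a tail $C'\sum_{|j|\ge M_n-L_n}(1+|j|)^{-\gamma}\to0$. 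For $k$ inside the window, requiring $|q_k(\omega)-\rho_k|<\varepsilon_n$ makes the inner part have small $L^2_{\text{unif}}$-norm $\le C''\varepsilon_n$, whence the relative $H_0$-boundedness of $L^2_{\text{unif}}$-potentials (Theorem XIII.96 in \cite{ReedSimon4}), whose additive constant scales with that norm, bounds its effect on the \emph{fixed} function $\varphi_n^+$ in terms of $\varepsilon_n$ and $\|H_0\psi_n\|$. Since $\psi_n$ is chosen first, I would then pick $M_n$ large and $\varepsilon_n$ small, depending only on $n$, so that the whole residual is below $2/n$ as soon as the matching condition holds.

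The crux is to guarantee, almost surely, that such windows exist at arbitrarily large positions. For fixed $n$, let $A^{(n)}_m$ be the event $\{\,|q_{\ell m+j}(\omega)-\rho_j|<\varepsilon_n\ \text{for all}\ |j|\le M_n\,\}$, using $\rho_{\ell m+j}=\rho_j$. Because each $\rho_j\in\operatorname{supp}P_0$ and only $\ell$ distinct values occur, $P(A^{(n)}_m)\ge(\min_j P_0(B_{\varepsilon_n}(\rho_j)))^{2M_n+1}>0$, uniformly in $m$; along an arithmetic progression of $m$ with gap exceeding $2M_n/\ell$ these events are independent with divergent probability sum, so the second Borel--Cantelli lemma makes infinitely many of them occur almost surely --- and, steering the progression to $+\infty$ (resp.\ $-\infty$), infinitely many with $\ell m-L_n\ge n$ (resp.\ $\ell m+L_n\le-n$). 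Intersecting these full-measure events over all $n$ leaves a single almost-sure set on which, for every $n$, an admissible $m_n$ can be chosen; this produces $\varphi_n^+$ with $\operatorname{supp}\varphi_n^+\subset[n,\infty)$ and $\|(H_{\text{per}}+V_\omega-\lambda)\varphi_n^+\|\to0$, and symmetrically $\varphi_n^-$ supported in $(-\infty,-n]$. I expect the main difficulty to be precisely this coordination of the two limits $M_n\to\infty$ and $\varepsilon_n\to0$ with the Borel--Cantelli selection, which is why one must intersect countably many almost-sure events instead of invoking the lemma once.
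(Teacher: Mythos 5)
Your proof is correct and follows essentially the same route as the paper's source: the paper gives no proof of this lemma but cites Kirsch--Martinelli \cite{KiMa}, whose argument is exactly this combination of far-translated Weyl sequences for the $\ell$-periodic operator $H_{\text{per}}+W$ with a second Borel--Cantelli selection of distant disjoint windows on which the random couplings $\varepsilon$-match the periodic pattern $(\rho_k)$. The deterministic part (tail estimate from $\gamma>1$ plus relative $H_{0}$-boundedness for the in-window error, with $M_n$, $\varepsilon_n$ fixed before the probabilistic step) and the intersection of countably many almost-sure events are handled correctly, so there is no gap.
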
 \goodbreak

From this we conclude

\begin{thm} Almost surely 
\begin{align*}
\sigma_{\text{\textup{ess}}}\left(H_{\text{\textup{per}}}+U+V_{\omega}\right)=\left(a^{-}+\sigma_{\text{\textup{ess}}}\left(H_{\text{\textup{per}}}+V_{\omega}\right)\right)\cup\left(a^{+}+\sigma_{\text{\textup{ess}}}\left(H_{\text{\textup{per}}}+V_{\omega}\right)\right).
\end{align*}
\end{thm}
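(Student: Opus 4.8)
The plan is to prove the two inclusions separately, isolating the genuinely random/analytic content from routine bookkeeping. The inclusion $\subseteq$ is immediate and holds for \emph{every} fixed $\omega$: since $V_{\text{per}}+V_{\omega}$ is measurable and $H_{0}$-bounded, I apply Corollary \ref{thm1040} with $V:=V_{\text{per}}+V_{\omega}$. Because $H_{U+(V_{\text{per}}+V_{\omega})}=H_{\text{per}}+U+V_{\omega}$ and $H_{V_{\text{per}}+V_{\omega}}=H_{\text{per}}+V_{\omega}$, this yields at once
\[
\sigma_{\text{ess}}\left(H_{\text{per}}+U+V_{\omega}\right)\subset\left(a^{-}+\sigma_{\text{ess}}\left(H_{\text{per}}+V_{\omega}\right)\right)\cup\left(a^{+}+\sigma_{\text{ess}}\left(H_{\text{per}}+V_{\omega}\right)\right),
\]
so the substance lies entirely in the reverse inclusion, which I obtain via Theorem \ref{th:KiMa} and Lemma \ref{lem:KiMa}. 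By symmetry it suffices to show $a^{-}+\sigma_{\text{ess}}\left(H_{\text{per}}+V_{\omega}\right)\subset\sigma_{\text{ess}}\left(H_{\text{per}}+U+V_{\omega}\right)$ (the $a^{+}$ case is identical, using the sequences $\varphi_{n}^{+}$ supported in $[n,\infty)$ where $U\to a^{+}$). By Theorem \ref{th:KiMa}, $\sigma_{\text{ess}}\left(H_{\text{per}}+V_{\omega}\right)$ is almost surely the deterministic set $\overline{\bigcup_{W\in\mathcal{P}}\sigma\left(H_{\text{per}}+W\right)}$.

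First I fix a periodic admissible $W$ and a point $\mu\in\sigma\left(H_{\text{per}}+W\right)$, and aim to show that $\mu+a^{-}\in\sigma_{\text{ess}}\left(H_{\text{per}}+U+V_{\omega}\right)$ almost surely. Lemma \ref{lem:KiMa} supplies normalized functions $\varphi_{n}^{-}$ in the domain of $H_{\text{per}}+W$ with $\operatorname{supp}\varphi_{n}^{-}\subset(-\infty,-n]$ and, for almost all $\omega$, $\left\Vert \left(H_{\text{per}}+V_{\omega}-\mu\right)\varphi_{n}^{-}\right\Vert \to0$. Writing
\[
\left(H_{\text{per}}+U+V_{\omega}-(\mu+a^{-})\right)\varphi_{n}^{-}=\left(H_{\text{per}}+V_{\omega}-\mu\right)\varphi_{n}^{-}+\left(U-a^{-}\right)\varphi_{n}^{-},
\]
I must show that the second term tends to $0$; then $\varphi_{n}^{-}$ is a Weyl sequence for $H_{\text{per}}+U+V_{\omega}$ at $\mu+a^{-}$, and Weyl's criterion closes this step.

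The required estimate $\left\Vert \left(U-a^{-}\right)\varphi_{n}^{-}\right\Vert \to0$ is precisely the mechanism already exploited in the proof of Theorem \ref{th:ess1}. Since $V_{\text{per}}+V_{\omega}\in L_{\text{unif}}^{2,\,\text{loc}}$ is $H_{0}$-bounded with relative bound zero, the domains of $H_{\text{per}}+V_{\omega}$ and $H_{0}$ coincide and the Weyl condition forces $\left\Vert H_{0}\varphi_{n}^{-}\right\Vert$ to remain bounded; combined with $\operatorname{supp}\varphi_{n}^{-}\subset(-\infty,-n]$ and the decay $U(x)\to a^{-}$ as $x\to-\infty$ (in the uniformly local $L^{2}$ sense), the local $L^{2}$-mass of $U-a^{-}$ over the support of $\varphi_{n}^{-}$ vanishes, which gives the claim. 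I expect this to be the only genuinely analytic obstacle, and it is exactly the point imported from Theorem \ref{th:ess1}; the rest is organization.

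Finally I assemble the pieces. Choosing a countable dense subset of $\bigcup_{W\in\mathcal{P}}\sigma\left(H_{\text{per}}+W\right)$, the previous step holds for each of its points on an almost-sure event; intersecting these countably many events gives a single almost-sure event on which $a^{-}+\bigcup_{W\in\mathcal{P}}\sigma\left(H_{\text{per}}+W\right)\subset\sigma_{\text{ess}}\left(H_{\text{per}}+U+V_{\omega}\right)$. Because the essential spectrum is closed and translation by $a^{-}$ is a homeomorphism (hence commutes with closure), passing to closures yields $a^{-}+\sigma_{\text{ess}}\left(H_{\text{per}}+V_{\omega}\right)\subset\sigma_{\text{ess}}\left(H_{\text{per}}+U+V_{\omega}\right)$, and symmetrically for $a^{+}$. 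Together with the easy inclusion from the first paragraph this establishes the asserted equality almost surely.
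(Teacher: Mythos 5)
Your proposal is correct, and it shares the paper's global strategy (easy inclusion via Corollary \ref{thm1040}; hard inclusion via the Kirsch--Martinelli machinery, closedness of the essential spectrum, and Theorem \ref{th:KiMa}), but it decomposes the hard inclusion differently. The paper never manipulates Weyl sequences for the random operator directly: it first applies Theorem \ref{thm1050} to the \emph{deterministic} operator $H_{\text{per}}+U+W$ (using that $V_{\text{per}}+W$ is again periodic) to get $a^{\pm}+\sigma\left(H_{\text{per}}+W\right)\subset\sigma_{\text{ess}}\left(H_{\text{per}}+U+W\right)$, and then transfers this to the random operator through the almost-sure inclusion $\sigma_{\text{ess}}\left(H_{\text{per}}+U+W\right)\subset\sigma_{\text{ess}}\left(H_{\text{per}}+U+V_{\omega}\right)$, which it does not prove but cites as ``easy to see (e.g. as in \cite{KiMa})''. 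You bypass the intermediate operator $H_{\text{per}}+U+W$ entirely: you take the almost-sure Weyl sequences $\varphi_{n}^{\mp}$ of Lemma \ref{lem:KiMa} for $H_{\text{per}}+V_{\omega}$ at $\mu$, supported in $(-\infty,-n]$ resp. $[n,\infty)$, and absorb the background via $\left\Vert \left(U-a^{\mp}\right)\varphi_{n}^{\mp}\right\Vert \to0$. This is arguably what the paper intended --- it states Lemma \ref{lem:KiMa} immediately before the theorem with the words ``From this we conclude'', yet its written proof never actually invokes that lemma --- so your version makes the dependence explicit and fills in the transfer step that the paper only cites. Two further points in your favor: your countable-dense-subset argument addresses the fact that $\mathcal{P}$, hence the union over $W\in\mathcal{P}$, is in general uncountable, so the almost-sure events cannot be intersected naively --- a point the paper glosses over; and your estimate of $\left\Vert \left(U-a^{-}\right)\varphi_{n}^{-}\right\Vert$ is more robust than needed, since pointwise convergence $U(x)\to a^{-}$ already gives $\sup_{x\leqslant-n}\left|U(x)-a^{-}\right|\to0$ and hence the claim without any appeal to the boundedness of $\left\Vert H_{0}\varphi_{n}^{-}\right\Vert$ (though invoking it costs nothing and covers a merely locally averaged notion of convergence).
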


\begin{proof} By Corollary \ref{thm1040} we know that 
\begin{align*}
\sigma_{\text{ess}}\left(H_{\text{\textup{per}}}+U+V_{\omega}\right)\subset\left(a^{-}+\sigma_{\text{ess}}\left(H_{\text{\textup{per}}}+V_{\omega}\right)\right)\cup\left(a^{+}+\sigma_{\text{ess}}\left(H_{\text{\textup{per}}}+V_{\omega}\right)\right).
\end{align*}
To prove the converse we observe that for any $W\in\mathcal{P}$ 
\begin{align*}
a^{\pm}+\sigma\left(H_{\text{\textup{per}}}+W\right)\subset\sigma_{\text{ess}}\left(H_{\text{\textup{per}}}+U+W\right)
\end{align*}
by Theorem \ref{thm1050}. It is easy to see (e.g. as in \cite{KiMa})
that almost surely for $W\in\mathcal{P}$ 
\begin{align*}
\sigma_{\text{ess}}\left(H_{\text{\textup{per}}}+U+W\right)\subset\sigma_{\text{ess}}\left(H_{\text{\textup{per}}}+U+V_{\omega}\right).
\end{align*}
Hence we conclude that 
\begin{align*}
\bigcup_{W\in\mathcal{P}}\sigma\left(H_{\text{\textup{per}}}+W+a^{+}\right)\cup\bigcup_{W\in\mathcal{P}}\sigma\left(H_{\text{\textup{per}}}+W+a^{-}\right)\subset\sigma_{\text{ess}}\left(H_{\text{\textup{per}}}+U+V_{\omega}\right).
\end{align*}
Since the right side is a closed set, we infer from Theorem \ref{th:KiMa}
that almost surely 
\begin{align*}
\left(a^{-}+\sigma_{\text{ess}}\left(H_{\text{\textup{per}}}+V_{\omega}\right)\right)\cup\left(a^{+}+\sigma_{\text{ess}}\left(H_{\text{\textup{per}}}+V_{\omega}\right)\right)\subset\sigma_{\text{ess}}\left(H_{\text{\textup{per}}}+U+V_{\omega}\right)\,.
\end{align*}
\end{proof}

The following localization result is based on the work of D. Damanik
and G. Stolz \cite{DamSt11}.

\begin{thm} Let $U$ be continuous, $V_{\text{\textup{per}}}$ be
bounded and $f$ satisfy the estimate 
\[
c\chi_{I}\left(x\right)\leqslant f\left(x\right)\leqslant C\chi_{\left(0,1\right)}\quad\left(\mbox{a.e.}\;x\in\mathbb{R}\right)
\]
with constants $0<c\leqslant C<\infty$ and a non-trivial subinterval
$I$ of $\left(0,1\right)$. Moreover, let 
\[
f\left(x\right)>0\quad\left(\mbox{a.e.}\;x\in\left(a,b\right)\right),
\]
\[
f\left(x\right)=0\quad\left(\mbox{a.e.}\;x\in\mathbb{R}\backslash\left(a,b\right)\right)
\]
for a subinterval $\left(a,b\right)\subset\left(0,1\right)$. Then
almost surely the operator $H_{\omega}$ has a dense point spectrum
with exponentially decaying eigenfunctions and, possibly in addition,
isolated eigenvalues with finite multiplicities. \end{thm}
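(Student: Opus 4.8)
The plan is to prove Anderson localization by reducing the operator $H_\omega = H_{\text{per}} + U + V_\omega$ to an operator that the Damanik--Stolz machinery \cite{DamSt11} applies to, and then showing that the background potential $U$ does not disturb localization. The theorem statement asserts dense pure point spectrum with exponentially decaying eigenfunctions, together with the caveat about finitely many additional isolated eigenvalues, so I expect the proof to split into a ``bulk'' part handled by the abstract localization criterion and a ``low-energy'' part controlled by the essential-spectrum analysis of the preceding sections.

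First I would invoke the results of Damanik and Stolz for the operator \emph{without} the background potential, namely $H_{\text{per}} + V_\omega$. Their framework treats one-dimensional Schr\"odinger operators with random single-site potentials of exactly the hypothesized form: the two-sided covering and positivity conditions $c\chi_I \le f \le C\chi_{(0,1)}$ and the support condition $f>0$ on $(a,b)$, $f=0$ off $(a,b)$ are the structural assumptions that guarantee the requisite regularity of the Lyapunov exponent and the verification of the ``large deviation'' and Wegner-type estimates needed to run the multiscale or spectral-averaging argument. Boundedness of $V_{\text{per}}$ and the periodicity keep the unperturbed operator $H_{\text{per}}$ within the scope of the Floquet--Bloch analysis that underlies their treatment of periodic backgrounds. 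This step should yield almost surely that $H_{\text{per}} + V_\omega$ exhibits Anderson localization on its entire spectrum.

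Next I would incorporate $U$. The key observation is that $U$ is continuous with finite limits $a^\pm$ at $\pm\infty$, so $U - U_c$ (with $U_c$ the step function from the earlier remark) is a continuous function vanishing at infinity; in particular it is a relatively compact perturbation. The strategy is to argue that adding $U$ changes neither the localization character of the spectrum nor, up to the stated exceptions, the set of eigenvalues. The natural route is to transfer the exponentially localized Weyl/eigenfunction construction: a generalized eigenfunction expansion (Schnol/Kotani-type argument) shows that spectrally almost every energy in $\sigma_{\text{ess}}(H_\omega)$ supports a polynomially bounded generalized eigenfunction, and the transfer-matrix dynamics far out to $\pm\infty$ are governed by $H_{\text{per}}+V_\omega$ with the constant shifts $a^\pm$, because $U$ is asymptotically constant there. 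Thus the positive Lyapunov exponent survives, forcing exponential decay of the generalized eigenfunctions and hence genuine $L^2$ eigenfunctions. Since the essential spectrum was identified in the previous theorem as $(a^- + \sigma_{\text{ess}}(H_{\text{per}}+V_\omega)) \cup (a^+ + \sigma_{\text{ess}}(H_{\text{per}}+V_\omega))$, localization on the bulk follows on this whole set.

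The remaining point is the phrase ``possibly in addition, isolated eigenvalues with finite multiplicities.'' These arise below the bottom of the essential spectrum: the perturbation $U$, being a genuine (non-constant) potential well relative to the steps $a^\pm$, may create discrete eigenvalues in the spectral gap beneath $\inf\sigma_{\text{ess}}(H_\omega)$, and such eigenvalues are automatically isolated and of finite multiplicity for a one-dimensional Schr\"odinger operator. I would dispose of them simply by noting that the discrete spectrum consists of isolated points of finite multiplicity by definition, so they are harmless for the localization assertion. The main obstacle, I expect, is the middle step: rigorously transferring the positive-Lyapunov-exponent and exponential-decay conclusions from $H_{\text{per}}+V_\omega$ to $H_\omega$ in the presence of the non-compactly-supported (only asymptotically constant) perturbation $U$. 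One must verify that the asymptotic transfer-matrix behavior controlling decay at $\pm\infty$ is genuinely insensitive to the decaying part $U - U_c$, which requires an ODE/asymptotic-integration estimate ensuring that the generalized eigenfunctions still decay at the Lyapunov rate despite the slowly varying tail of $U$.
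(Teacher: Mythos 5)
There is a genuine gap in your proposal, and it lies exactly where you yourself flag ``the main obstacle'': the step that transfers localization from $H_{\text{per}}+V_\omega$ to $H_\omega$ by treating $U$ as a relatively compact, asymptotically vanishing perturbation. This step cannot be carried out along the lines you sketch. Spectral type is notoriously unstable under such perturbations: pure point spectrum can be destroyed even by rank-one perturbations (del Rio--Makarov--Simon phenomenon), and the Weyl--von Neumann theorem shows how fragile the point/continuous dichotomy is under compact perturbations, so ``relatively compact'' buys you preservation of the \emph{essential spectrum} only, not of its pure point nature. Your fallback, an asymptotic-integration/transfer-matrix argument, also fails as stated: the hypotheses give $U(x)\to a^{\pm}$ with \emph{no rate whatsoever}, whereas Levinson-type asymptotic integration and Lyapunov-rate-of-decay arguments require at least an $L^{1}$ or power-law decay of $U-U_{c}$; slowly decaying deterministic tails are precisely the kind of perturbation known to be capable of producing singular continuous spectrum. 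Finally, even granting positivity of the Lyapunov exponent at $\pm\infty$, passing from that to pure point spectrum requires spectral averaging (Kotani/Simon--Wolff), which relies on stationarity of the potential --- and $U$ destroys stationarity. That non-stationarity is exactly the difficulty the Damanik--Stolz machinery was built to overcome, so your plan essentially asks you to re-prove their theorem without its key technical input.

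The idea you missed is that no perturbation argument is needed at all: the Damanik--Stolz theorem is formulated for the operator $-\frac{d^{2}}{dx^{2}}+W_{0}+V_{\omega}$ with a \emph{deterministic background} $W_{0}\in L^{\infty}_{\mathbb{R}}(\mathbb{R})$, and its only hypothesis on $W_{0}$ is that the set of restricted integer translates
\[
M\left(W_{0}\right)=\left\{ W_{0}\left(\cdot-n\right)\big|_{\left(0,1\right)}:\;n\in\mathbb{Z}\right\}
\]
be relatively compact in $L^{\infty}\left(0,1\right)$. The paper simply takes $W_{0}=U+V_{\text{per}}$ and verifies this condition: since $V_{\text{per}}$ is $1$-periodic, $M\left(U+V_{\text{per}}\right)=M\left(U\right)+V_{\text{per}}$, so it suffices to treat $M\left(U\right)$; continuity of $U$ together with the existence of finite limits at $\pm\infty$ gives uniform continuity, hence equicontinuity (and uniform boundedness) of $M\left(U\right)$, and Arzel\`a--Ascoli yields relative compactness in $C\left[0,1\right]\subset L^{\infty}\left(0,1\right)$. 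That is the entire proof. Note also a smaller point: the clause about ``isolated eigenvalues with finite multiplicities'' is part of the conclusion of the Damanik--Stolz theorem itself, not something to be accounted for separately by a discrete-spectrum argument below $\inf\sigma_{\text{ess}}\left(H_{\omega}\right)$ as you propose.
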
 \emph{Proof.}
Damanik and Stolz \cite{DamSt11} proved that (under the formulated
assumptions on $f$) if for a function $W_{0}\in L_{\mathbb{R}}^{\infty}\left(\mathbb{R}\right)$
the set 
\[
M\left(W_{0}\right):=\left\{ W_{0}\left(\cdot-n\right)\big|_{\left(0,1\right)}:\;n\in\mathbb{Z}\right\} 
\]
is relatively compact in $L^{\infty}\left(0,1\right)$, then almost
surely the operator $-\frac{d^{2}}{dx^{2}}+W_{0}+V_{\omega}$ has
a dense point spectrum with exponentially decaying eigenfunctions
and, possibly in addition, isolated eigenvalues with finite multiplicities.
Since 
\[
M\left(U+V_{\text{per}}\right)=M\left(U\right)+V_{\text{per}}\,,
\]
hence it remains to show the relative compactness of $M\left(U\right)$.
The continuity of $U$ and the existence of its finite limits at $\pm\infty$
imply the uniform continuity of $U$. The latter, in turn, implies
the equicontinuity of $M\left(U\right)$. According to Arzela-Ascoli
theorem, $M\left(U\right)$ is relatively compact in $C\left[0,1\right]$
and hence in $L^{\infty}\left(0,1\right)$.\qed

\section{The Integrated Density of States}

In this section we investigate the integrated density of states of
the operator $H_{\omega}$.

\begin{defn} Let $A$ be a self-adjoint operator bounded below and
with (possibly infinite) purely discrete spectrum $\lambda_{1}(A)\leqslant\lambda_{2}(A)\leqslant\lambda_{3}(A)\leqslant\ldots,$
where the eigenvalues are counted according to their multiplicities.
Denote 
\[
N(A,E):=\#\,\{j:\;\lambda_{j}(A)\leqslant E\}\quad\left(E\in\mathbb{R}\right).
\]
For $H=H_{0}+W$ with $W\in L_{\text{unif }}^{2,\,\text{loc}}$ and
$a,b\in\mathbb{R},\:a<b$ we define $H_{a,b}^{D}$ to be the operator
$H$ restricted to $L^{2}(a,b)$ with Dirichlet boundary conditions
both at $a$ and $b$. Similarly, $H_{a,b}^{N}$ has Neumann boundary
conditions at $a$ and $b$, $H_{a,b}^{D,N}$ has Dirichlet boundary
condition at $a$ and Neumann boundary condition at $b$, $H_{a,b}^{N,D}$
has Neumann boundary condition at $a$ and Dirichlet one at $b$.
\medskip{}

If for $H=H_{0}+W$ the limit 
\begin{align*}
\mathcal{N}(E)=\mathcal{N}(H,E):=\lim_{L\to\infty}\,\frac{1}{2L}\,N\left(H_{-L,L}^{D},E\right)
\end{align*}
exists for all but countably many $E$, we call $\mathcal{N}(E)$
the \emph{integrated density of states} for $H$. \end{defn}

It is well known that under our assumptions the integrated density
of states for $H_{\text{per}}+V_{\omega}$ exists, more precisely:
\begin{thm}\label{th:DOSexist} If $V_{\omega}$ satisfies the assumptions
of Section 1, then the integrated density of states $\mathcal{N}(H_{\text{\textup{per}}}+V_{\omega},E)$
almost surely exists and for all but countably many $E$ the following
equalities hold: 
\[
\mathcal{N}(H_{\text{\textup{per}}}+V_{\omega},E)=\lim_{L\to\infty}\,\frac{N\left(H_{-L,L}^{N}\left(E\right)\right)}{2L}=\lim_{L\to\infty}\,\frac{\mathbb{E}\left(N\left(H_{-L,L}^{D}\left(E\right)\right)\right)}{2L}=\lim_{L\to\infty}\,\frac{\mathbb{E}\left(N\left(H_{-L,L}^{N}\left(E\right)\right)\right)}{2L}
\]
($\mathbb{E}$ denotes the expectation with respect to $\mathbb{P}$).
\end{thm} In the case of $V_{\text{per}}=0$ Theorem \ref{th:DOSexist}
is proved in \cite{KiMados}; the proof in the general case is similar
and uses the method of Dirichlet-Neumann bracketing (see \cite{ReedSimon4}).
In particular, it is used:

\begin{thm}\label{th:DNbrack} If $a<c<b$ and $X,Y\in\{D,N\}$,
then 
\[
N\left(H_{a,c}^{X,D},E\right)+N\left(H_{c,b}^{D,Y},E\right)\leqslant N\left(H_{a,b}^{X,Y},E\right)\leqslant N\left(H_{a,c}^{X,N},E\right)+N\left(H_{c,b}^{N,Y},E\right)\quad\left(E\in\mathbb{R}\right).
\]
\end{thm}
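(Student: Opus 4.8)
The plan is to reduce everything to the variational (min-max) characterization of the counting function. For a self-adjoint operator $H$ bounded below with purely discrete spectrum, quadratic form $q_{H}$, and form domain $\mathcal{Q}(H)$, the min-max principle gives
\[
N(H,E)=\max\bigl\{\dim V:\ V\subset\mathcal{Q}(H),\ q_{H}(\psi)\leqslant E\left\Vert \psi\right\Vert ^{2}\ \text{for all}\ \psi\in V\bigr\},
\]
where $V$ ranges over finite-dimensional subspaces. From this I read off two elementary principles that drive the whole argument. The first is \emph{additivity over orthogonal sums}: if $H=A\oplus B$ on an orthogonal direct sum of Hilbert spaces, then $\sigma(H)$ is the union of $\sigma(A)$ and $\sigma(B)$ counted with multiplicity, so $N(A\oplus B,E)=N(A,E)+N(B,E)$. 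The second is \emph{form-domain monotonicity}: if two operators $H_{1},H_{2}$ carry the same quadratic form wherever both are defined and $\mathcal{Q}(H_{1})\subseteq\mathcal{Q}(H_{2})$, then every subspace admissible for $H_{1}$ is admissible for $H_{2}$, whence $N(H_{1},E)\leqslant N(H_{2},E)$.

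With these tools in hand the proof becomes a statement purely about form domains. After the orthogonal splitting $L^{2}(a,b)=L^{2}(a,c)\oplus L^{2}(c,b)$, the decoupled operators $H_{a,c}^{X,D}\oplus H_{c,b}^{D,Y}$ and $H_{a,c}^{X,N}\oplus H_{c,b}^{N,Y}$ act on this direct sum, and by additivity their counting functions are precisely the two sums in the theorem. All three operators $H_{a,b}^{X,Y}$, the Dirichlet-decoupled one, and the Neumann-decoupled one share the \emph{same} quadratic form $q(\psi)=\int|\psi'|^{2}+\int W|\psi|^{2}$, with the potential term splitting as $\int_{a}^{c}+\int_{c}^{b}$; only the form domains differ. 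I would describe these domains as $H^{1}$ functions subject to the Dirichlet conditions as essential (vanishing) boundary conditions, with Neumann conditions imposing nothing. The Dirichlet condition at the interior cut $c$ forces vanishing there, so the Dirichlet-decoupled form domain is exactly the set of $\psi\in\mathcal{Q}(H_{a,b}^{X,Y})$ with $\psi(c)=0$, giving the inclusion $\mathcal{Q}(H_{a,c}^{X,D}\oplus H_{c,b}^{D,Y})\subset\mathcal{Q}(H_{a,b}^{X,Y})$. Conversely, a function in $\mathcal{Q}(H_{a,b}^{X,Y})$ is globally $H^{1}$, hence continuous at $c$, so it restricts to a pair of $H^{1}$ pieces meeting the empty Neumann condition at $c$, giving $\mathcal{Q}(H_{a,b}^{X,Y})\subset\mathcal{Q}(H_{a,c}^{X,N}\oplus H_{c,b}^{N,Y})$.

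Feeding these two inclusions into form-domain monotonicity and then applying additivity yields
\[
N(H_{a,c}^{X,D},E)+N(H_{c,b}^{D,Y},E)\leqslant N(H_{a,b}^{X,Y},E)\leqslant N(H_{a,c}^{X,N},E)+N(H_{c,b}^{N,Y},E),
\]
which is the assertion. The only delicate point, and the step I expect to be the main obstacle, is the precise identification of the form domains together with the associated gluing lemma: one must verify that joining an $H^{1}$ function on $(a,c)$ to an $H^{1}$ function on $(c,b)$ with equal values at $c$ produces a genuine $H^{1}$ function on $(a,b)$ (no distributional mass at the seam), and that the Dirichlet condition at $c$ is captured exactly by the single scalar constraint $\psi(c)=0$. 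Since $W$ lies in $L_{\text{unif }}^{2,\,\text{loc}}$, it is $H_{0}$-bounded and the form is closed on $H^{1}$ with the stated boundary conditions, so these identifications are standard one-dimensional Sobolev facts and the bracketing inequalities follow with no further analysis.
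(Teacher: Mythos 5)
Your proof is correct. The paper itself contains no proof of this theorem --- it is quoted as a known result, with the method of Dirichlet--Neumann bracketing attributed to \cite{ReedSimon4} --- and your argument (the min-max characterization of $N(\cdot,E)$, additivity over the splitting $L^{2}(a,b)=L^{2}(a,c)\oplus L^{2}(c,b)$, and the form-domain inclusions $\mathcal{Q}\left(H_{a,c}^{X,D}\oplus H_{c,b}^{D,Y}\right)\subset\mathcal{Q}\left(H_{a,b}^{X,Y}\right)\subset\mathcal{Q}\left(H_{a,c}^{X,N}\oplus H_{c,b}^{N,Y}\right)$ via the one-dimensional gluing/restriction lemma at $c$) is precisely the standard proof given in that reference, so it matches the argument the paper relies on.
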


For the integrated density of states of the operator $H_{\omega}$
we have the following result.

\begin{thm}\label{th:DOScomp} The integrated density of states $\mathcal{N}(H_{\omega},E)$
almost surely exists and can be expressed in terms of $\mathcal{N}_{1}(E)$,
the integrated density of states of $H_{\text{\textup{per}}}+V_{\omega}$
by: 
\[
\mathcal{N}(H_{\omega},E)~=~\frac{1}{2}\,\mathcal{N}_{1}(E-a^{-})\;+\;\frac{1}{2}\,\mathcal{N}_{1}(E-a^{+}).
\]
\end{thm}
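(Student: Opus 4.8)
The plan is to evaluate the Dirichlet eigenvalue counting function of $H_{\omega}$ on the symmetric interval $(-L,L)$ by isolating a central transition region where $U$ interpolates between $a^{-}$ and $a^{+}$, and then comparing the two outer pieces with the ergodic operator $H_{\text{per}}+V_{\omega}$ shifted by $a^{-}$ on the left and $a^{+}$ on the right. Fix $\epsilon>0$ and, using $U(x)\to a^{\pm}$, choose $R\in\mathbb{N}$ so large that $|U(x)-a^{-}|<\epsilon$ for $x\le-R$ and $|U(x)-a^{+}|<\epsilon$ for $x\ge R$. Applying the Dirichlet--Neumann bracketing of Theorem \ref{th:DNbrack} twice, at the splitting points $-R$ and $R$, I would sandwich $N((H_{\omega})^{D}_{-L,L},E)$ between a sum of Dirichlet counts on $(-L,-R)$, $(-R,R)$, $(R,L)$ (from below) and a sum of Neumann counts on the same three intervals (from above). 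The central interval $(-R,R)$ has fixed length, so $N((H_{\omega})^{N}_{-R,R},E)$ is finite and its contribution vanishes after division by $2L$.

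On the left interval $(-L,-R)$ the bound $a^{-}-\epsilon\le U\le a^{-}+\epsilon$ gives, as quadratic forms,
\[
(H_{\text{per}}+V_{\omega})^{BC}_{-L,-R}+(a^{-}-\epsilon)I\ \le\ (H_{\omega})^{BC}_{-L,-R}\ \le\ (H_{\text{per}}+V_{\omega})^{BC}_{-L,-R}+(a^{-}+\epsilon)I
\]
for each fixed choice of boundary conditions $BC$, and by monotonicity of eigenvalues this translates into
\[
N\big((H_{\text{per}}+V_{\omega})^{BC}_{-L,-R},E-a^{-}-\epsilon\big)\ \le\ N\big((H_{\omega})^{BC}_{-L,-R},E\big)\ \le\ N\big((H_{\text{per}}+V_{\omega})^{BC}_{-L,-R},E-a^{-}+\epsilon\big).
\]
The analogous estimate with $a^{+}$ holds on $(R,L)$. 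Thus the whole problem is reduced to the asymptotics of eigenvalue counts for the \emph{ergodic} operator $H_{\text{per}}+V_{\omega}$ on the shifted intervals $(-L,-R)$ and $(R,L)$, each of length $L-R\sim L$.

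Here I would invoke the same ergodic machinery (the superadditive ergodic theorem applied to the Dirichlet counts, together with Neumann subadditivity, exactly as in the proof of Theorem \ref{th:DOSexist}) to conclude that, almost surely and at every continuity point $E'$ of $\mathcal{N}_{1}$,
\[
\frac{1}{2L}\,N\big((H_{\text{per}}+V_{\omega})^{BC}_{-L,-R},E'\big)\ \longrightarrow\ \tfrac{1}{2}\,\mathcal{N}_{1}(E'),\qquad \frac{1}{2L}\,N\big((H_{\text{per}}+V_{\omega})^{BC}_{R,L},E'\big)\ \longrightarrow\ \tfrac{1}{2}\,\mathcal{N}_{1}(E'),
\]
independently of $BC$, the factor $\tfrac12$ coming from $(L-R)/(2L)\to\tfrac12$. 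Combining this with the bracketing and the vanishing central term yields
\[
\tfrac12\mathcal{N}_{1}(E-a^{-}-\epsilon)+\tfrac12\mathcal{N}_{1}(E-a^{+}-\epsilon)\ \le\ \liminf_{L}\tfrac{1}{2L}N\big((H_{\omega})^{D}_{-L,L},E\big)
\]
together with the matching bound $\limsup_{L}\tfrac{1}{2L}N((H_{\omega})^{D}_{-L,L},E)\le\tfrac12\mathcal{N}_{1}(E-a^{-}+\epsilon)+\tfrac12\mathcal{N}_{1}(E-a^{+}+\epsilon)$. Letting $\epsilon\downarrow0$ and using that $\mathcal{N}_{1}$, being monotone, is continuous at all but countably many points, both bounds converge to $\tfrac12\mathcal{N}_{1}(E-a^{-})+\tfrac12\mathcal{N}_{1}(E-a^{+})$ whenever $E-a^{-}$ and $E-a^{+}$ are continuity points of $\mathcal{N}_{1}$; this shows both that the limit defining $\mathcal{N}(H_{\omega},E)$ exists and that it equals the asserted value for all but countably many $E$.

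I expect the main obstacle to be the reduction carried out in the third paragraph: justifying that the counts for $H_{\text{per}}+V_{\omega}$ on the \emph{off-center, one-sided} intervals $(-L,-R)$ and $(R,L)$ are governed by the same $\mathcal{N}_{1}$ with the correct weight $\tfrac12$. This is where stationarity and ergodicity of the alloy-type family, as well as the boundary-condition independence of $\mathcal{N}_{1}$ (both already built into Theorem \ref{th:DOSexist}), are essential; one must also keep in mind that $V_{\omega}$ restricted to a finite interval still depends on all the couplings $q_{k}$ through the tails of $f$, so the restriction is understood in the same form sense as in Theorem \ref{th:DOSexist}. A minor bookkeeping point is to confine the almost-sure statements and the $\epsilon$-limit to countable dense sets of parameters and then extend by monotonicity, so that a single null set serves simultaneously.
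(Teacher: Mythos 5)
Your proposal is correct and follows essentially the same route as the paper: Dirichlet--Neumann bracketing at fixed splitting points beyond which $U$ is $\varepsilon$-close to $a^{\pm}$, a middle term of fixed length that vanishes after division by $2L$, comparison of the two outer pieces with the ergodic operator $H_{\text{per}}+V_{\omega}$ shifted by $a^{\mp}$, and a final $\varepsilon\downarrow 0$ limit at continuity points of $\mathcal{N}_{1}$. The only cosmetic difference is that you justify the one-sided, off-center interval asymptotics almost surely via the superadditive ergodic machinery, whereas the paper isolates exactly this step as Lemma \ref{lem:dos}, obtaining it in expectation by re-centering the interval through stationarity and then invoking Theorem \ref{th:DOSexist}.
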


To prove this result we need the following lemma:

\begin{lem}\label{lem:dos} For the integrated density of states
$\mathcal{N}_{1}$ of $H_{\text{\textup{per}}}+V_{\omega}$ we have
\[
\mathcal{N}_{1}(E)~=~\lim_{L\to\infty}\,\frac{1}{L}\;\mathbb{E}\left(N\left(\left(H_{\text{\textup{per}}}+V_{\omega}\right)_{M,L}^{X,Y}\right)\right)=~\lim_{L\to\infty}\,\frac{1}{L}\;\mathbb{E}\left(N\left(\left(H_{\text{\textup{per}}}+V_{\omega}\right)_{-L,-M}^{X,Y}\right)\right)
\]
for any fixed $M\in\mathbb{R}$ and any $X,Y\in\{D,N\}$.\end{lem}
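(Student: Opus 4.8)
The plan is to establish the stated equalities by a translation-invariance-in-distribution argument combined with the existence result of Theorem~\ref{th:DOSexist}. The key point is that the random potential $V_\omega$ of the form \eqref{eq:0010} is generated by i.i.d.\ random variables $q_k$ indexed by $\mathbb{Z}$, and $V_{\text{per}}$ is $1$-periodic; consequently the whole operator $H_{\text{per}}+V_\omega$ is \emph{stationary} under integer shifts in distribution. Concretely, if $T_m$ denotes translation by $m\in\mathbb{Z}$, then $T_m(H_{\text{per}}+V_\omega)T_m^{-1}$ has the same distribution as $H_{\text{per}}+V_\omega$, because shifting $V_{\text{per}}$ by an integer leaves it invariant and shifting the index family $(q_k)$ only permutes i.i.d.\ variables. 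I would first record this stationarity explicitly.

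Next I would exploit the fact that the eigenvalue-counting function of a box operator depends on the operator only through its restriction to that box. For an interval $[M,L]$, the operator $(H_{\text{per}}+V_\omega)_{M,L}^{X,Y}$ with given boundary conditions is determined by the potential on $[M,L]$; after applying the unitary translation $T_{\lfloor M\rfloor}$ (or more carefully, shifting by the nearest integer to make $V_{\text{per}}$ match), the box $[M,L]$ is carried to a box of the \emph{same length} $L-M$ positioned near the origin, and the counting function is unchanged since $N$ is a spectral quantity invariant under unitary conjugation. Thus
\[
\mathbb{E}\Bigl(N\bigl((H_{\text{per}}+V_\omega)_{M,L}^{X,Y}\bigr)\Bigr)
=\mathbb{E}\Bigl(N\bigl((H_{\text{per}}+V_\omega)_{0,L-M}^{X,Y}\bigr)\Bigr)
\]
up to the harmless issue that $M$ need not be an integer. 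Dividing by $L$ and letting $L\to\infty$, the difference between normalizing by $L$ and by $L-M$ vanishes, and the resulting limit is precisely $\mathcal{N}_1(E)$ by Theorem~\ref{th:DOSexist} (the symmetric-box limit equals the half-line-box limit because the box length tends to infinity and the per-unit-length count is what survives). The same reasoning with the box $[-L,-M]$ gives the second stated equality; here one uses stationarity together with the reflection-type bookkeeping that a box of length $L-M$ placed to the left of the origin has, in distribution, the same counting function as one placed to the right, since the law of the family $(q_k)_{k\in\mathbb{Z}}$ is invariant under $k\mapsto -k$ when $P_0$ is a fixed common distribution.

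The main obstacle is the non-integer offset $M$ and the comparison of a \emph{one-sided} box normalization ($\frac{1}{L}$ over $[M,L]$) with the \emph{two-sided} definition of $\mathcal{N}_1$ (which uses $\frac{1}{2L}$ over $[-L,L]$). To handle this cleanly I would invoke Dirichlet--Neumann bracketing (Theorem~\ref{th:DNbrack}): inserting cut points, one sandwiches $N$ over $[-L,L]$ between sums of $N$ over $[-L,M]$ and $[M,L]$ with Dirichlet/Neumann boundary conditions, and the boundary-condition discrepancy contributes an error of order $O(1)$ per cut, hence $o(L)$ after division by $L$. This shows that the one-sided per-unit-length limit over $[M,L]$ coincides with the symmetric per-unit-length limit defining $\mathcal{N}_1$, and that the choice of boundary conditions $X,Y$ is immaterial in the limit, exactly as in the proof of Theorem~\ref{th:DOSexist}. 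Assembling stationarity, unitary invariance of $N$, and the bracketing error estimate then yields both claimed equalities for every fixed $M\in\mathbb{R}$ and all $X,Y\in\{D,N\}$.
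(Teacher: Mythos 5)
Your core argument is the one the paper uses: the paper's entire proof is that, by stationarity,
\begin{align*}
\mathbb{E}\left(N\left(\left(H_{\text{per}}+V_{\omega}\right)_{M,L}^{X,Y}\right)\right)~=~\mathbb{E}\left(N\left(\left(H_{\text{per}}+V_{\omega}\right)_{-(L-M)/2,\,(L-M)/2}^{X,Y}\right)\right),
\end{align*}
after which Theorem \ref{th:DOSexist} gives the limit. The bookkeeping you spell out (the $\frac{L-M}{L}\to1$ normalization, the non-integer shift, and the sandwich $N^{D,D}\leqslant N^{X,Y}\leqslant N^{N,N}$ needed because Theorem \ref{th:DOSexist} only states the pure Dirichlet and pure Neumann cases) is exactly what the paper leaves implicit in ``follows from Theorem \ref{th:DOSexist}.'' One structural difference: the paper recenters the box to the \emph{symmetric} interval $[-(L-M)/2,(L-M)/2]$, so Theorem \ref{th:DOSexist} applies verbatim, whereas you shift to the one-sided box $[0,L-M]$ and then need the additional bracketing detour to compare a one-sided box with a symmetric one. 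Recentering is cleaner, but both routes work.

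One step of yours, however, is genuinely wrong: for the interval $[-L,-M]$ you invoke a ``reflection-type'' identity, claiming a box to the left of the origin has the same distribution of the counting function as a box to the right because the law of $(q_k)_{k\in\mathbb{Z}}$ is invariant under $k\mapsto-k$. Reflection invariance of the couplings does not give reflection invariance of the operator: under $x\mapsto-x$ the potential $\sum_k q_k f(x-k)$ becomes $\sum_j q_{-j}\,\tilde f(x-j)$ with $\tilde f(y)=f(-y)$, i.e., an alloy potential with the \emph{reflected} single-site potential, and likewise $V_{\text{per}}$ is replaced by $V_{\text{per}}(-\,\cdot)$ and the boundary conditions $X,Y$ get swapped. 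Since neither $f$ nor $V_{\text{per}}$ is assumed even, the two boxes are not equidistributed, and this claim fails in general. Fortunately the step is superfluous: integer translation by roughly $(L+M)/2$ carries $[-L,-M]$ onto the centered interval of the same length, so plain stationarity --- exactly as for $[M,L]$ --- yields the second equality. Delete the reflection remark and your proof coincides with the paper's.
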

\begin{proof} By the stationarity of the potential we have 
\begin{align*}
\mathbb{E}\left(N\left(\left(H_{\text{\textup{per}}}+V_{\omega}\right)_{M,L}^{X,Y}\right)\right)~=~\mathbb{E}\left(N\left(\left(H_{\text{\textup{per}}}+V_{\omega}\right)_{-(L-M)/2,\,(L-M)/2}^{X,Y}\right)\right)\,.
\end{align*}
Thus, the lemma follows from Theorem \ref{th:DOSexist}. \end{proof}

Now we prove Theorem \ref{th:DOScomp}.

\begin{proof} For $L>\left|M\right|$ we have 
\begin{multline}
\mathbb{E}\left(N\left(\left(H_{\omega}\right)_{-L,L}^{X,Y}\right)\right)\leqslant\mathbb{E}\left(N\left(\left(H_{\text{\textup{per}}}+U+V_{\omega}\right)_{-L,-M}^{X,N}\right)\right)+\\
+\mathbb{E}\left(N\left(\left(H_{\omega}\right)_{-M,M}^{N,N}\right)\right)+\mathbb{E}\left(N\left(\left(H_{\text{\textup{per}}}+U+V_{\omega}\right)_{M,L}^{N,Y}\right)\right).~~~~\label{eq:ineq}
\end{multline}
We take $M>0$ so large that $|U(x)-a^{-}|<\varepsilon/2$ for $x\leqslant-M$
and $|U(x)-a^{+}|<\varepsilon/2$ for $x\leqslant M$. Let us divide
inequality \eqref{eq:ineq} by $2L$. Then the middle term goes to
zero as $L\to\infty$. Moreover, in the limit the first term on the
right hand side can be bounded by $\frac{1}{2}\,\mathcal{N}_{1}(E-a^{-})\;+\;\varepsilon/2$.
Similarly, the third term can be bounded by $\frac{1}{2}\,\mathcal{N}_{1}(E-a^{+})\;+\;\varepsilon/2$.
Since $\varepsilon>0$ was arbitrary, we proved that 
\[
\limsup_{L\to\infty}\,\mathbb{E}\left(N\left(\left(H_{\omega}\right)_{-L,L}^{X,Y}\right)\right)~\leqslant~\frac{1}{2}\,\mathcal{N}_{1}(E-a^{-})\;+\;\frac{1}{2}\,\mathcal{N}_{1}(E-a^{+}).
\]

The opposite inequality is proved using the analogue of \eqref{eq:ineq}
for Dirichlet boundary conditions (instead of Neumann ones). \end{proof}

\end{document}